\begin{document}
\title{Complexity results for the Pilot Assignment problem in Cell-Free Massive MIMO 
}
\titlerunning{Computational results on the Pilot Assignment problem}
%
\author{Shruthi Prusty \inst{1}\textsuperscript{(\Letter)}\orcidlink{0009-0003-4157-2427} \and
Sofiat Olaosebikan\inst{2}\orcidlink{0000-0002-8003-7887}}
\authorrunning{S. Prusty and S. Olaosebikan}
%
\institute{Department of Mathematics, Indian Institute of Science Education and Research Pune, Pune, India \\
\email{shruthi.prusty@students.iiserpune.ac.in} \and
School of Computing Science, University of Glasgow, Glasgow, UK
\email{Sofiat.Olaosebikan@glasgow.ac.uk}\\
}
\maketitle              
\begin{abstract}
Wireless communication is enabling billions of people to connect to each other and the internet, transforming every sector of the economy, and building the foundations for powerful new technologies that hold great promise to improve lives at an unprecedented rate and scale. The rapid increase in the number of devices and the associated demands for higher data rates and broader network coverage fuels the need for more robust wireless technologies. The key technology identified to address this problem is referred to as Cell-Free Massive MIMO (CF-mMIMO). CF-mMIMO is accompanied by many challenges, one of which is efficiently allocating limited resources. In this paper, we focus on a major resource allocation problem in wireless networks, namely the Pilot Assignment problem (\textsc{PA}). We show that \textsc{PA} is strongly NP-hard and that it does not admit a polynomial-time constant-factor approximation algorithm. Further, we show that \textsc{PA} cannot be approximated in polynomial time within $\mathcal{O}(K^2)$ (where $K$ is the number of users) when the system consists of at least three pilots. Finally, we present an approximation lower bound of $1.058$ (resp.~$\epsilon|K|^2$, for $\epsilon >0$) in special cases where the system consists of exactly two (resp.~three) pilots.

\keywords{Pilot Assignment \and Cell-Free Massive MIMO \and NP-hard optimization problem \and Strong NP-hardness \and Approximability}
\end{abstract}
\section{Introduction}
\subsection{Background: Cell-free Massive MIMO}
Wireless networks is an essential technology for enabling flexible communication and connectivity between individuals (or machines) across regions. In addition, it is transforming every sector of the economy (transportation, healthcare, education, etc.), and powerful new technologies (artificial intelligence, internet of things, etc.) are being built upon it. Cellular networks is the technology that 1G to 5G relies on \cite{IBNFL19,zhang2020prospective}. Here, the coverage area is divided up into non-overlapping cells, and we have a single Access Point (AP) coordinating data transmissions amongst user devices within its cell. 
\\ \\
As the number of devices that depend on wireless communication networks continues to grow, each needing a high connection rate and better coverage with minimal interference, this technology will no longer be suitable \cite{ngo2017cell,zhang2020prospective}. Particularly, resources (e.g. channels, power, spectrum) in a wireless network are limited and must be managed efficiently. Further, the User Equipments (UEs) at the edge of the cells get poor service due to \emph{inter-cell interference}.
\\ \\
For future wireless communications (e.g., 6G), the key technology that has the potential to enhance connectivity and provide better coverage for billions of users is referred to as Cell-Free Massive Multiple-Input Multiple-Output (CF-mMIMO). As the name suggests,  CF-mMIMO allows a device/user (UE) to be served by multiple access points (APs) that are within its range without the notion of boundaries; in contrast to the current technology, which allows each UE to be served by only one AP within a defined boundary. The goal of this network is to reduce inter-cell interferences, improve the uniform distribution of spectral efficiency amongst users and enhance network reliability \cite{ngo2017cell}. CF-mMIMO is accompanied by many challenges, one of which is how to efficiently manage limited resources (spectrum, pilot signals, energy, and power) – the so-called resource allocation problem in wireless networks \cite{CHEN2022695}. 

\subsection{The Pilot Assignment problem} \label{subsec12}
Throughout this paper, we will consider a CF-mMIMO system with $M$ single-antenna APs and $K$ $(K \ll M )$ UEs where the APs are randomly distributed in a large area. Since a large number of distributed APs jointly provide uniform service to a small number of UEs without any notion of boundaries, it is often the case that \emph{AP selection} is done for each user, such that only a subset of the APs providing service above a certain threshold are considered for any energy or spectral efficiency calculations pertaining to that user \cite{liu2020graph,ZHLW2021}. An important assumption in this paper is that AP selection is always done for any CF-mMIMO system we consider \cite{ZHLW2021}. We now turn our attention towards a major problem that hinders resource allocation in wireless communication networks, and is a resource allocation problem itself: the Pilot Assignment problem (\textsc{PA}). 
\\ \\
It is essential to acquire accurate \emph{Channel State Information (CSI)} between the UEs and the APs to reap all the benefits potentially provided by the distributed user-centric CF-mMIMO architecture \cite{CHEN2022695}. Channel estimation allows APs to process data signals from the UEs. To perform channel estimation for a user, a \emph{pilot signal} needs to be assigned to it. In the system setting, it is assumed that APs and UEs do not have \emph{a priori} CSI at the beginning of a coherent interval. The CSI is estimated in what is called a \emph{pilot training phase}, which usually happens in the uplink. Channel estimation is needed only at the beginning of a coherent interval $\tau_c$. Thus, $\tau$ pilot sequences (or signals) of length $\tau$ each are assigned to the UEs prior to uplink data transmission for channel estimation. Each UE is assigned one pilot. The received pilot information is used for channel estimation. The estimated channel is used to detect the received data, thereby allowing us to calculate the Spectral Efficiency (SE) of each UE \cite{DZ2022}. However, due to the limited length of the coherence interval, the available number of orthogonal pilot sequences is normally smaller than the number of UEs $(\tau \ll K)$ and some UEs have to reuse a given pilot. Hence, the orthogonality among the pilot sequences for all UEs is typically not achieved. The pilot reuse causes an impairment known as pilot contamination, which can degrade the system performance, by lowering the achievable uplink/downlink rates and signal-to-interference-plus-noise ratio (SINR) of the system \cite{liu2020graph,ZHLW2021}.

\subsection{Existing work and contributions of this paper} \label{subsec13}
\textsc{PA} has been extensively researched from the engineering perspective, with the aim of constructing heuristic-based algorithms that return sub-optimal solutions for this distributed architecture. The most straightforward and naive pilot allocation strategy is the \emph{random} pilot assignment scheme, in which the available pilots are assigned randomly to each user \cite{ngo2017cell}. Of course, this does not address any of the problems, including two nearby users sharing the same pilot, and thus turns out to be the worst scheme. The \emph{greedy} pilot assignment scheme proposed in \cite{ngo2017cell} works by iteratively improving the downlink rate for the worst user. However, such a method can only improve the worst user's performance at any given point, and cannot guarantee an improvement in the whole system's performance. This means that this algorithm is not guaranteed to converge stably to a global maximum value. 
\\ \\
The \emph{location-based greedy} pilot assignment scheme utilizes the location information of the users for the initial assignment of pilots in the greedy scheme, instead of random assignment. This method, however, does not prove to be very effective in practice and only promotes the throughput performances of a few users \cite{liu2020graph,ZCZQY2018}. The \emph{structured} pilot assignment scheme maximizes the minimum distance among UEs that share the same pilot using a clustering algorithm, but the implementation in a real-world cell-free massive MIMO system is hindered by the difficulty of finding the centroid APs in such practical systems \cite{AAL2018,liu2020graph}. The \emph{Maximal Increment} (MI) algorithm maximizes the achievable (downlink) rate by maximizing the increment in an iterative algorithm, but it has high time complexity \cite{Yin2019PilotAA}. The \emph{Hungarian Algorithm} based pilot assignment scheme is an iterative procedure based on the Hungarian algorithm to enhance system throughput and fairness by avoiding pilot reuse among nearby users. However, it has been observed that it is not sufficiently accurate to measure pilot contamination solely based on the geographical distance between the users \cite{BDFZF2021,ZHLW2021}. \\ \\
Recently, two graph-based pilot assignment schemes have been used: the \emph{graph-colouring} based pilot assignment scheme \cite{liu2020graph} and the \emph{Max-$k$-Cut} based pilot assignment scheme in a weighted graphic framework \cite{ZHLW2021}. In both cases, the CF-mMIMO architecture is modelled as a graph, with the UEs forming the vertices and the edges representing the interference between the UEs. The pilot assignment optimization is then solved on this graph. Since both the graph-colouring and the Max-$k$-Cut problems fall in the class of NP-hard problems, heuristic graph algorithms have been employed to find reasonable solutions to \textsc{PA}, supported by experimental results obtained via simulations, where they outperform other non-graph theoretic algorithms with reasonable time complexity. Further, the Max-$k$-Cut scheme even outperforms the graph-colouring scheme \cite{liu2020graph,ZHLW2021}, making it the state of the art. This led us to believe that there must be some sort of natural relation between the pilot assignment optimization problem and classical graph-theoretic optimization problems. The existing experimental results in the literature aim to optimize a certain Quality of Service (QoS) metric such as the sum-user SE, the system throughput, and the system Energy Efficiency (EE), all of which can be expressed via the uplink/downlink achievable rates, which further depend on the Signal-to-Interference-plus-Noise Ratio (SINR) \cite{BDFZF2021,CHEN2022695,LGX2021,liu2020graph,ZHLW2021}.

\paragraph{Our contribution.}
Despite all the research effort that has been put into solving the pilot assignment problem in cell-free massive MIMO systems, \textsc{PA} has received very little attention from a theoretical computer science perspective. In particular, to the best of our knowledge, there are no complexity results for this problem in the literature. Consequently, in this paper, we show that \textsc{PA} is \emph{strongly} NP-hard via a reduction from \textsc{Min-$k$-Partition}. We further show that the problem does not admit a polynomial-time constant-factor approximation algorithm. Inspired by the results in \cite{Eisenblaetter2001}, we show that \textsc{PA} cannot be approximated in polynomial time within $\mathcal{O}(K^2)$ (where $K$ is the number of users)  when there are at least three pilots in the system. We also present an approximation lower bound of $1.058$ (resp.~$\epsilon|K|^2$, for $\epsilon >0$) in special cases where the system consists of exactly two (resp.~three) pilots. The implication of our result is that any positive (explicit bounds on performance ratios) or negative approximation results for \textsc{Min-$k$-Partition} can be directly translated for \textsc{PA}.

\paragraph{Organization.} The remainder of this paper is organized as follows: We present preliminary definitions, and formally define the CF-mMIMO system as well as the \textsc{PA} optimization problem in Section \ref{sec2}. We present our complexity results on \textsc{PA} in Section \ref{sec3}. Finally, in Section \ref{sec4}, we present some concluding remarks and potential directions for future work.

\section{Preliminaries and problem definition} \label{sec2}
\subsection{Definitions}
The results in this paper use well-established terminologies from complexity and approximation theory. While we recall the most important ones for completness, we refer interested readers to \cite[Chapter 1]{Ausiello1999} for the formal definitions of terms such as optimization problems, NP-hardness, strong NP-hardness, approximation classes APX and $F$-APX, and the complete definition of AP-reducibility.

\begin{definition}[\textbf{Optimization problem}]\label{def1}
An \textbf{optimization problem} $\Pi$ is a tuple $(I_{\Pi}, \, SOL_{\Pi}, \, m_{\Pi}, \, \textup{goal}_{\Pi})$ where:
\begin{itemize}
    \item $I_{\Pi}$ is the set of instances of $\Pi$,
    \item $SOL_{\Pi}$ is a function that associates to an instance $x \in I_{\Pi}$, the set of feasible solutions of $x$,
    \item $m_{\Pi}$ is a measure function, defined for a pair $(x,y)$ where $x \in I_{\Pi}$ and $y \in SOL_{\Pi}(x)$, that returns a positive rational which is the value of the feasible solution $y$.
    \item $\textup{goal}_{\Pi} \in \{\textsc{min, max}\}$ denotes whether $\Pi$ is a minimization or maximization problem.
\end{itemize}    
\end{definition}
\noindent
Given an input instance $x$, we will denote by $SOL_P^*(x)$ the set of optimal solutions of $x$. The value of any optimal solution $y^*(x)$ of $x$ will be denoted by $m^*_{\Pi}(x)$.

\begin{definition}[\textbf{The class NPO}]\label{def2}
 An optimization problem $\Pi = (I_{\Pi}, \, SOL_{\Pi}, \, \\ m_{\Pi}, \, \textup{goal}_{\Pi})$, where the tuple consists of problem instances, feasible solutions for every instance, measure function associated with every feasible solution, and goal of optimization respectively, belongs to the class \textup{\textbf{NPO}} if the following holds: 
 \begin{enumerate}
     \item the set of instances $I_{\Pi}$ is recognizable in polynomial time,
     \item there exists a polynomial $q$ such that, given an instance $x \in I_{\Pi}$, 
     \begin{enumerate}
         \item $|y| \leq q(|x|) \ \ \forall \; y \in SOL_{\Pi}(x)$,
         \item and it is decidable in polynomial time whether $y \in SOL_{\Pi}(x)$ for every $y$ with $|y| \leq q(|x|)$.
     \end{enumerate} 
     \item the measure function $m_{\Pi}$ is computable in polynomial time. 
 \end{enumerate}   
\end{definition}

\begin{definition}[\textbf{NP-hard optimization problems}] \label{def3}
An optimization problem $\Pi$ is called \textup{\textbf{NP-hard}} if, for every decision problem $\Pi' \in \textup{NP}$, $\Pi' \leq_T^p \Pi$, that is, $\Pi'$ can be solved in polynomial time by an algorithm which queries an ``oracle'' that, for any instance $x \in I_{\Pi}$, returns an optimal solution $y^*(x)$ of $x$ along with its value $m_{\Pi}^*(x)$.
\end{definition}

\begin{definition}[\textbf{Strong NP-hardness}]\label{def4}
Let $\Pi$ be a decision problem in NP, with an instance $x \in I_{\Pi}$. Let $\textup{max}(x)$ denote the value of the largest number occurring in the instance $x$. Given a polynomial $p$, let $\Pi^{\textup{max},p}$ be the restriction of $\Pi$ to those instances with the property that $\textup{max}(x) \leq p(|x|)$. If $\Pi^{\textup{max},p}$ remains an \textup{NP-hard} problem for some polynomial $p$, then $\Pi$ is called \textbf{strongly \textup{NP-hard}}.     
\end{definition}

\begin{definition}[\textbf{Performance ratio}] \label{def5}
 Given an optimization problem $\Pi$, for any instance $x$ of $\Pi$ and for any feasible solution $y$ of $x$, the \textbf{performance ratio} of $y$ with respect to $x$ is defined as \[ R(x,y) = \max \left(\frac{m(x,y)}{m^*(x)}, \frac{m^*(x)}{m(x,y)}\right).\] 
\end{definition}

\begin{definition}[\textbf{$\bm{r(n)}$-approximate algorithm}] \label{def6}
Given an optimization problem $\Pi$ in \textup{NPO}, an approximation algorithm $\mathcal{A}$ for $\Pi$, and a function $r : \mathbb{N} \mapsto (1, \infty)$, we say that $\mathcal{A}$ is an \textbf{$\bm{r(n)}$-approximate algorithm} for $\Pi$ if, for any instance $x$ such that $SOL(x) \neq \emptyset$, the performance ratio of the feasible solution $\mathcal{A}(x)$ with respect to $x$ satisfies the following inequality: \[R(x, \mathcal{A}(x)) \leq r(|x|).\]
\end{definition}

\begin{definition}[\textbf{Class APX}] \label{def7}
The class of all \textup{NPO} problems $\Pi$ such that, for some fixed constant $c > 1$, there exists a polynomial-time $c$-approximate algorithm $($also called constant-factor approximation algorithm$)$ for $\Pi$. \\[-2ex]
\end{definition}

\begin{definition}[\textbf{Class \textup{$F$-APX}}] \label{def8} 
Given a class of functions $F$, \textup{$F$-APX} is the class of all \text{NPO} problems $\Pi$ such that, for some function $r \in F$, there exists a polynomial-time $r(n)$-approximate algorithm for $\Pi$.
\end{definition}

\noindent
In the context of approximation algorithms, a reduction from a problem $\Pi_1$ to a problem $\Pi_2$ should guarantee that an approximate solution for $\Pi_2$ yields an approximate solution for $\Pi_1$. Thus we need not only a function $f$ mapping instances of $\Pi_1$ into instances of $\Pi_2$, but also a function $g$ mapping back solutions of $\Pi_2$ into solutions of $\Pi_1$. Next, we define an approximation-preserving reducibility called \emph{AP-reducibility}.

\begin{definition}[\textbf{AP-reducibility}]\label{def9}
  Let $P_1$ and $P_2$ be two optimization problems in NPO. $P_1$ is said to be \textbf{AP-reducible} to $P_2$, written $P_1 \leq_{AP} P_2$, if there exist two functions $f$ and $g$ and a constant $\alpha \geq 1$ such that:
 \begin{enumerate} [itemsep=0pt, topsep=0pt]
     \item For an instance $x \in I_{P_1}$, and for any rational $r > 1$, $f(x,r) \in I_{P_2}$.
     \item For an instance $x \in I_{P_1}$, and for any rational $r > 1$, if $SOL_{P_1}(x) \neq \emptyset$, then \linebreak $SOL_{P_2}(f(x,r)) \neq \emptyset$. 
     \item For any instance $x \in I_{P_1}$, for any rational $r > 1$, and for any $y \in SOL_{P_2}(f(x,r))$, $g(x,y,r) \in SOL_{P_1}(x)$.
     \item $f$ and $g$ are computable by two algorithms $\mathcal{A}_f$ and $\mathcal{A}_g$, respectively, whose running time is polynomial for any fixed rational $r$. 
     \item For any instance $x \in I_{P_1}$, for any rational $r > 1$, and for any $y \in SOL_{P_2}(f(x,r))$, \[R_{P_2}(f(x,r),y) \leq r \: \Rightarrow \: R_{P_1}(x, g(x,y,r)) \leq 1 + \alpha(r-1).\]
     In the rest of this paper, this condition will be referred to as the \textbf{AP-condition}.
 \end{enumerate}
 The triple $(f, g, \alpha)$ is said to be an \textbf{AP-reduction} from $P_1$ to $P_2$. 
\end{definition}

\begin{remark}\label{rem1}
 In most reductions from one optimization problem to another in the literature, the quality of the solution we are looking for is not required to be known explicitly. This is the case with our reductions as well, so we shall replace $f(x,r)$ and $g(x,y,r)$ with $f(x)$ and $g(x,y)$, respectively.
\end{remark}

\subsection{System Model} \label{subsec22}
Building on the description of \textsc{PA} in cell-free massive MIMO (CF-mMIMO) in Section \ref{subsec12}, we remark that the channel estimation error caused by pilot contamination translates into affected achievable rates and eventually leads to an observable degradation of system throughput. Thus, the system performance of cell-free massive MIMO has been characterized by the system throughput in the literature \cite{ZHLW2021}, which is defined as $ \sum\limits_{k=1}^K R_k^{u}$,
where $K$ is the number of users and $R_k^{u}$ denotes the uplink achievable rate for user $k$. The quantity $R_k^{u}$ is further defined as $R_k^{u} =$
\begin{equation}\label{eq1}
     \frac{1-\tau/\tau_c}{2}\log_{2}\left(1 + \frac{\rho^u\eta_k \Bigl( \sum_{m \in A(k)} \gamma_{km} \Bigl)^2}{\splitfrac{\rho^u\sum_{k' \in O(k)}\eta_{k'} \Bigl(\sum_{m \in A(k)} \gamma_{km}\frac{\beta_{k'm}}{\beta_{km}}\Bigl)^2} {+ \rho^u\sum_{k'=1}^K \eta_{k'}\sum_{m \in A(k)} \gamma_{km}\beta_{k'm} + \sum_{m \in A(k)}\gamma_{km}}} \right)
\end{equation}
where 
\renewcommand{\labelitemi}{\textperiodcentered}
\begin{itemize}[itemsep = 0pt, topsep=-3pt]
    \item $\eta_{k}$ is the uplink power control coefficient,
    \item $\rho^u$ is the normalized uplink SNR (signal-to-noise ratio),
    \item $\beta_{km}$ denotes the large-scale fading coefficient between user $k$ and AP $m$ including geometric path loss and shadowing, 
    \item $\gamma_{km}$ denotes the mean-square of the channel estimation of the channel coefficient between user $k$ and AP $m$, 
    \item $A(k)$ denotes the indices of the APs serving user $k$, and
    \item $O(k)$ denotes the set of indices of users $k'$ with the same pilot as user $k$, excluding user $k$. 
\end{itemize}  

\subsection{Problem formulation} \label{subsec23}
We present the first mathematical definition of a CF-mMIMO system as well as a feasible pilot assignment for the system.

\begin{definition}[\textbf{Cell-Free Massive MIMO system}] \label{def10}
Let $\mathcal{A}$ denote the set of APs with cardinality $M$, $U$ denote the set of users with cardinality $K$, and $\Psi$ denote the set of pilots with cardinality $\tau$. Let $\bm{\beta}$ be a $K \times M$ matrix, with the $(k,m)$-th element denoting the large-scale coefficient $\beta_{km}$ between user $k$ and AP $m$. For each user $k$, the set $A(k) \; (1 \leq k \leq K)$ denotes the indices of the subset of APs serving it, as described above. We shall refer to the tuple $\left(\mathcal{A},U,\{A(k)\}_{k = 1}^{K},\bm{\beta},\Psi\right)$ as a \textbf{cell-free massive MIMO (CF-mMIMO) system S}.   
\end{definition}

\begin{definition}[\textbf{Feasible Pilot Assignment}] \label{def11}
A \textbf{feasible} pilot assignment for a CF-mMIMO system $S$ is a well-defined, surjective function $f : U \rightarrow \Psi$. 
\end{definition}

\noindent We state an easy-to-see lemma:

\begin{lemma}\label{lem1}
Finding feasible pilot assignments for the users in a given cell-free massive MIMO system $S$ can be done in polynomial time. 
\end{lemma}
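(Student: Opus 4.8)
The plan is to give a direct constructive argument: I would exhibit one feasible pilot assignment explicitly and then verify that the construction runs in time polynomial in the size of the encoding of $S$. Since a feasible pilot assignment is, by Definition~\ref{def11}, nothing more than a well-defined surjection $f : U \to \Psi$, producing a single such map already establishes the claim.

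First I would record the structural assumption underlying the system model, namely that the number of pilots does not exceed the number of users, $\tau \le K$ (the paper in fact assumes $\tau \ll K$). This is precisely the condition under which a surjection from a set of size $K$ onto a set of size $\tau$ can exist, so it guarantees both that the object I am about to build is well-formed and that the set of feasible solutions of $S$ is non-empty. Next I would fix an arbitrary labelling of the users as $u_1, \dots, u_K$ and of the pilots as $\psi_1, \dots, \psi_\tau$, and define
\[
f(u_i) = \begin{cases} \psi_i & \text{if } 1 \le i \le \tau, \\ \psi_1 & \text{if } \tau < i \le K. \end{cases}
\]

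I would then check the two defining properties from Definition~\ref{def11}. The map $f$ is well-defined because every user receives exactly one pilot, and it is surjective because each pilot $\psi_j$ with $1 \le j \le \tau$ is the image of the user $u_j$, so the whole of $\Psi$ lies in the range of $f$. Thus $f$ is a feasible pilot assignment for $S$.

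Finally I would bound the running time. Producing $f$ amounts to a single pass through the list of users, assigning $\mathcal{O}(1)$ data per user, for a total of $\mathcal{O}(K)$ elementary operations; since any reasonable encoding of $S$ must list all $K$ users and is therefore of size at least $K$, this is polynomial in $|S|$. There is no genuine obstacle here — the statement is elementary, as the paper signals by calling it ``easy-to-see'' — and the only point that needs a moment's care is surjectivity, which is exactly what the first $\tau$ assignments are designed to enforce.
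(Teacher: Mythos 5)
Your proposal is correct and follows essentially the same route as the paper: both construct an explicit greedy assignment that gives a handful of users distinct pilots to guarantee surjectivity and then maps all remaining users to a single pilot (the paper uses $\tau-1$ users plus the rest on the $\tau$-th pilot; you use the first $\tau$ users plus the rest on $\psi_1$ --- the same idea). Your version is in fact slightly more careful, since you state the precondition $\tau \le K$ explicitly and give the $\mathcal{O}(K)$ running-time bound rather than leaving it implicit.
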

\begin{proof}  
All we really need is the partition of the set of users $U$ such that no two subsets in the partition intersect. As long as $K \gg \tau$, this is always possible. A simple greedy approach to see that this is true would be to randomly select $\tau - 1$ users from $U$, and assign them to the first $\tau - 1$ pilots in $\Psi$ respectively. Then assign the remaining $K - \tau + 1$ users to the remaining $\tau$-th pilot in $\Psi$. By construction, this is a feasible pilot assignment for the system $S$.  \qed
\end{proof}

We now motivate our definition of the Pilot Assignment problem:\\

As mentioned earlier in Introduction Section \ref{subsec12}, we assume that AP selection for each user has already been done. Thus, every quantity in Equation (\ref{eq1}) is a constant. The pilot assignment problem seeks to find the optimum sets $O(k)$ for all $k$. Put simply, we seek a partition of the users into $\tau$ non-empty disjoint sets so that the pilot contamination in the system is minimized. How pilot contamination can be defined mathematically is still a topic of great interest. In this vein, notice that only the first term in the sum in the denominator of the huge fraction inside the logarithm expression in Equation (\ref{eq1}) changes with a change in the assignment of pilots to users. Further, as noted before, only a single term in the denominator changes, as the rest are all constants. Thus we focus on the term \[ \rho^u\sum_{k' \in O(k)}\eta_{k'} \Bigl(\sum_{m \in A(k)} \gamma_{km}\frac{\beta_{k'm}}{\beta_{km}}\Bigl)^2 \text{.} \] As the logarithm is an increasing function, a decrease in this term causes the entire expression to increase. In many well-cited papers in the field \cite{ngo2017cell,liu2020graph,ZHLW2021,ZCZQY2018}, the pilot contamination effect at the $k$-th user is denoted by the term \[\sum\limits_{k' \in O(k)} \sum\limits_{m \in A(k)} \left(\beta_{k'm}/\beta_{km}\right)^2 \text{.}\] 
We understand here that dropping all the constants multiplied with this term technically reduces it to a \emph{simpler} form of the pilot contamination problem. Thus, in this paper, we prove hardness results for this definition of the pilot contamination problem. Since our goal is to minimize pilot contamination in the system with an optimum pilot assignment for all the users, we shall consider the sum of the above terms for all the users, i.e., we aim to minimize \[\sum\limits_{k=1}^{K}\sum\limits_{k' \in O(k)} \sum\limits_{m \in A(k)} \left(\beta_{k'm}/\beta_{km}\right)^2 \text{.} \]

There is also some physical intuition behind dropping the above constants. The large-scale coefficient for a user and an AP is higher when they are geographically closer \cite{schwengler2019wireless}. It is safe to assume that for a user $k$, AP selection yields only those APs $m$ in the set $A(k)$ for which the large-scale coefficients $\beta_{km}$ are reasonably high. The above equations tell us that given a user $k$, the potential disturbance from users $k'$ is severe when the proximity between the interfering users $k'$ and the APs in the set $A(k)$ is higher than that between the user $k$ and the APs in $A(k)$. In a way, we want to put those users $k'$ in $O(k)$, whose large-scale coefficients with respect to the APs that have been selected to serve $k$ are low. In other words, they are farther from those APs than user $k$. \\

To summarize the development so far, we have that given $M$ APs, $K$ users, with each user assigned a set $A(k) \; (1 \leq k \leq K)$ which denotes the indices of the subset of APs serving it, large-scale coefficients $\beta_{km}$ between a user $k$ and AP $m$, and $\tau$ pilots, we need to find a partition of the $K$ users into $\tau$ disjoint sets (where a set contains the users served by the same pilot) such that 
\begin{equation}\label{eq4}
    \sum\limits_{k=1}^{K}\sum\limits_{k' \in O(k)} \sum\limits_{m \in A(k)} \left(\beta_{k'm}/\beta_{km}\right)^2
\end{equation}
is minimized, where $O(k)$ is the set of indices users $k'$ with the same pilot as user $k$, excluding user $k$. Note that since AP selection is done before pilot assignment, the innermost summation remains untouched in this optimization problem. \\

Now let the $\tau$ subsets of $U$ induced by $f$ be $V_1, V_2, \dots V_{\tau}$. In Equation (\ref{eq4}), observe the outer two summations. Given a user $k$, we only have to consider those users $k'$ which are served by the same pilot as it. In other words, given a set in a partition, we must look at all possible pairs of users within it. We must then look at the contribution of the \emph{pair} $k, k'$ to the second sum, which by symmetry turns out to be 
\begin{equation} \label{eq5}
    \sum\limits_{m \in A(k)} \left(\beta_{k'm}/\beta_{km}\right)^2 + \sum\limits_{m' \in A(k')} \left(\beta_{km'}/\beta_{k'm'}\right)^2
\end{equation}
As mentioned in \cite{ZHLW2021}, we can regard the above term as the quantity of interference (thus leading to potential pilot contamination) between users $k$ and $k'$. Since the order or permutation of the elements of the pair $k, k'$ matters, each pair of elements in a partition contributes \emph{two} terms to the second sum. \\

We then rewrite the outer two summations in Equation (\ref{eq4}) as a summation over all possible pairs $k, k'$ of Equation (\ref{eq5}), and then a summation of this over all sets in the partition:

\begin{equation} \label{eq6}
    \sum\limits_{t=1}^{\tau}\sum\limits_{k, k' \in V_t} \Biggl(\sum\limits_{m \in A(k)} \left(\beta_{k'm}/\beta_{km}\right)^2 + \sum\limits_{m' \in A(k')} \left(\beta_{km'}/\beta_{k'm'}\right)^2 \Biggl) \\[1.5ex]
\end{equation}

Thus, we get the following definition for the Pilot Assignment optimization problem: 

\begin{definition}[\textbf{Pilot Assignment Problem (\textsc{PA})}]\label{def12}
Given a cell-free massive MIMO system $S$, we call the optimization problem \[ \min\limits_{f feasible}\sum\limits_{\substack{k,k' \in U \\ f(k) = f(k')}} \Biggl(\sum\limits_{m \in A(k)} \left(\beta_{k'm}/\beta_{km}\right)^2 + \sum\limits_{m' \in A(k')} \left(\beta_{km'}/\beta_{k'm'}\right)^2 \Biggl) \] the \textbf{Pilot Assignment \textup{(\textsc{PA})}} problem. \\[-1ex]
\end{definition}
\noindent We now define a classical graph problem that shall be crucial in proving our results.

\begin{definition}[\textbf{\textsc{Min-$k$-Partition}}]\label{def13}
 Given an undirected graph $G = (V, E)$ with $n$ vertices and weight $\omega_{i,j} \in \mathbb{Q}_+$ for the edge joining vertices $i$ and $j \ \ \forall \; 1 \leq i,j \leq n$, the \textsc{Min-$k$-Partition} problem seeks to find a partition $\mathcal{V}$ of $V$ into $k$ disjoint sets $\{V_1, V_2, \dots, V_{k}\}$ such that the total weight of the edges with endpoints within the same set is minimum.    
\end{definition}

\noindent The objective of the above problem can be formulated as: 
\begin{equation} \label{eq2}
    \min\limits_{\mathcal{V} = \{V_1, V_2, \dots, V_{k}\}} \hspace{1mm} \sum\limits_{p=1}^{k}\hspace{1.5mm} \sum\limits_{i, j \in V_p} \omega_{i,j}
\end{equation}
\noindent Note that the \textsc{Max-$k$-Cut} problem referred to in Section \ref{subsec13} is the dual of the \textsc{Min-$k$-Partition} problem. Moreover, both the \textsc{Min-$k$-Partition} and the \textsc{Max-$k$-Cut} problems are known to be NP-hard \cite{ALMSS1998,Chopra1993ThePP,GJ1990}.  

 We shall formally define both the optimization and decision versions of the problem as per Definition \ref{def1} in Appendix \ref{apa}. These can be referred to if the reader wants an explicit definition of \textsc{PA}  as a tuple. 

\section{Complexity results for {\textsc PA}} \label{sec3}

\subsection{Primary results on complexity} \label{subsec31}
To talk about the computational complexity of {\textsc PA}, we assume that all numerical values appearing as input data are rational and that they are encoded in binary form. We now present our first major theorem.

\begin{theorem}\label{thm1}
The following results hold for the time complexity of \textsc{PA}: 
\begin{enumerate}[label={$(\roman*)$}, ref=$\thetheorem.($\roman*$)$]
\item \textsc{PA} $\in$ \textup{NPO}. \label{thm11}
\item \textsc{PA} is \textup{NP-hard}. \label{thm12}
\item \textsc{PA} is strongly \textup{NP-hard}. \label{thm13}
\end{enumerate}
\end{theorem}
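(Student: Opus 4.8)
The plan is to handle the three parts in increasing order of difficulty, building a single reduction that serves both (ii) and (iii).

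For (i), I would just verify the three conditions of Definition~\ref{def2}. An instance is a tuple $(\mathcal{A},U,\{A(k)\},\bm{\beta},\Psi)$; checking that $\bm{\beta}$ is a well-formed $K\times M$ rational matrix, that each $A(k)\subseteq\{1,\dots,M\}$, and that $\tau\le K$ is clearly polynomial, so $I_{\textsc{PA}}$ is recognizable in polynomial time. A feasible solution is a surjection $f:U\to\Psi$ (Definition~\ref{def11}), encodable in $O(K\log\tau)$ bits and hence polynomially bounded, with well-definedness and surjectivity checkable in polynomial time. Finally, the measure in Definition~\ref{def12} is a sum over at most $K^2$ user pairs, each an inner sum over at most $M$ squared ratios of rationals, so it is computable in polynomial time. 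Hence $\textsc{PA}\in\textup{NPO}$.

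For (ii), I would reduce from \textsc{Min-$k$-Partition}. The crux is that the PA objective (Definition~\ref{def12}, equivalently (\ref{eq6})) is exactly a weighted partition objective of the form (\ref{eq2}): for users $i,j$ set the symmetric pairwise weight $w_{ij}=\sum_{m\in A(i)}(\beta_{jm}/\beta_{im})^2+\sum_{m\in A(j)}(\beta_{im}/\beta_{jm})^2$; then the measure equals $\sum_{t=1}^{\tau}\sum_{\{i,j\}\subseteq V_t} w_{ij}$ up to an additive term and a positive multiplicative factor that depend only on the instance and not on $f$ (the additive term from the $k=k'$ diagonal, the factor from counting ordered pairs). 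Given a \textsc{Min-$k$-Partition} instance $(G=(V,E),\omega,k)$, I would build a CF-mMIMO system whose induced weights satisfy $w_{ij}=\omega_{ij}$: take $K=|V|$ users (one per vertex) and $\tau=k$ pilots, and for each edge $\{i,j\}$ introduce dedicated APs serving one chosen endpoint, with $\bm{\beta}$ nonzero only at $i$ and $j$ so that this edge contributes to no other pair. Since every nonnegative rational is a sum of four rational squares (Lagrange), at most four APs per edge realize any rational $\omega_{ij}$ as a sum of squared ratios while keeping $\bm{\beta}$ rational and $M=O(|E|)$. Feasible surjections $f$ then correspond to partitions of $V$ into $k$ nonempty parts, and the two objectives agree up to the affine transformation above; so an optimal PA solution yields an optimal \textsc{Min-$k$-Partition} solution and conversely, giving a polynomial-time Turing reduction. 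As \textsc{Min-$k$-Partition} is NP-hard, so is \textsc{PA}.

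For (iii), I would observe that the same reduction proves strong NP-hardness once fed the right source. \textsc{Min-$k$-Partition} is strongly NP-hard, since even its unit-weight restriction ($\omega_{ij}=1$ on edges, $0$ otherwise) is NP-hard (minimizing monochromatic edges is the complement of \textsc{Max-$k$-Cut}) and has all numbers bounded by $1$. Applied to a unit-weight instance, the construction needs only one AP per edge with every $\beta$ entry in $\{0,1\}$, so each number in the resulting PA instance, together with any threshold $B\le|E|$ in the decision version, is bounded by a polynomial in the instance size. Thus $\max(x)\le p(|x|)$ along the reduction and \textsc{PA} stays NP-hard on these polynomially-bounded instances, which is exactly strong NP-hardness in the sense of Definition~\ref{def4}.

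The main obstacle is the correctness of the reduction in (ii): one must choose the APs and coefficients so that each edge contributes its weight to exactly one user pair with no cross-contamination, confirm that $w_{ij}$ is genuinely symmetric, and carefully track the additive and multiplicative constants relating the PA measure to the \textsc{Min-$k$-Partition} measure (invoking the four-squares realization to keep $\bm{\beta}$ rational). Once this exact correspondence is in place, parts (i) and (iii) follow with little extra work.
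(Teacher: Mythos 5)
Your proposal is correct and follows the same skeleton as the paper's proof: verify the three NPO conditions, give a polynomial-time Turing reduction from \textsc{Min-$k$-Partition} to \textsc{PA}, and get strong NP-hardness by feeding that reduction instances whose numbers are polynomially bounded. The differences lie in the gadget and in the source of strong NP-hardness. The paper's gadget is leaner: each user $i$ gets a single dedicated AP ($A(i)=\{i\}$), with $\beta_{ii}=1$ and $\beta_{ij}=\sqrt{\omega((i,j))/2}$ for $j\neq i$, so a co-assigned pair contributes exactly $\omega((i,j))/2+\omega((i,j))/2=\omega((i,j))$. Your gadget spends up to four dedicated APs per edge and invokes the four-square theorem to realize each rational weight as a sum of squared rational ratios. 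What your extra work buys is consistency with the paper's own stipulation that all input data be rational and binary-encoded: the paper's entries $\sqrt{\omega/2}$ are in general irrational (even in the unit-weight case they are $\sqrt{1/2}$), a wrinkle the paper leaves unaddressed, whereas your construction keeps $\bm{\beta}$ rational, and in the unit-weight case $\{0,1\}$-valued, which also makes the number-boundedness required for strong NP-hardness immediate. For the source problem, the paper proves strong NP-hardness of \textsc{Min-$k$-Partition} by reducing \textsc{Graph $k$-Colourability} to its unit-weight restriction (optimum $0$ iff $k$-colourable), while you observe that unit-weight \textsc{Min-$k$-Partition} is the complement of unweighted \textsc{Max-$k$-Cut}; both justifications are sound. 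Two details worth making explicit in a full write-up (the paper glosses over them too): give every user a nonempty serving set, e.g.\ a dummy AP whose $\beta$-column is supported only on that user, which adds nothing to the objective; and note that since all weights are nonnegative, the optimum over surjective assignments (exactly $\tau$ nonempty classes) coincides with the optimum over arbitrary $k$-partitions, so the two problems' feasibility notions match.
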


\begin{proof}
   \begin{enumerate}[label={$(\roman*)$}]
    \item[]

    \item It is recognizable in polynomial time whether a string encodes a tuple representing a cell-free massive MIMO system, as it primarily involves a check of the cardinality of multiple sets, and the dimension of a matrix. If all the users and pilots are listed individually, then the encoding length of a feasible pilot assignment for the users cannot exceed the encoding length of the CF-mMIMO system. From Lemma \ref{lem1}, we know that it is decidable in polynomial time if a pilot assignment is feasible. Finally, it is clear that Equation (\ref{eq6}) is computable in polynomial time. Therefore, \textsc{PA} $\in$ NPO. \\
    
    \item We now present our proof of part $(ii)$, i.e., \textsc {PA} is NP-hard, by establishing the following claim: \textsc{Min-$k$-Partition} $\leq_T^p$ \textsc{PA}.
  \\ \\
    Consider an arbitrary instance of the \textsc{Min-$k$-Partition} problem, specified by a weight\-ed graph $G = (V, E, \omega)$, where $\omega : E \rightarrow \mathbb{R}_+$ is a function mapping edges to their weights. We construct an instance of \textsc{PA} as follows: \\ \\    
    Set $K = |V|$. The set of users $U$ is set to be the set of vertices $V$. Thus, $|U| = K$. The number of pilots $\tau$ is set to be $k$. This determines the set $\Psi = \{1, \dots, k\}$. For all users indexed by $1 \leq i \leq K$, set $|A(i)| = 1$. This determines $K$ number of APs. Since we know that in a practical cell-free massive MIMO system, we have $M \gg K$, we could define many dummy APs to achieve this condition. We set the total number of APs in our \textsc{PA} instance as some arbitrarily large constant $M \gg K$, which determines the set $\mathcal{A} = \{1, \dots, K, \dots, M\}$. Further, $\forall \; 1 \leq i \leq K$, let $A(i) = \{i\}$. Thus, a single, distinct AP indexed by $i$ serves the user $i$. Now, $\forall \; 1 \leq i \leq K$ and  $\forall \; 1 \leq m \leq M$ set
    
    $$ \beta_{im} =
        \begin{cases}
        1     & \text{if $m = i$,}\\
        \sqrt{\frac{\omega((i, m))}{2}}     & \text{if $m \neq i$ and $m \leq K$,}\\
        0  & \text{otherwise.}
        \end{cases}$$
    This determines the matrix $\bm{\beta}$ in our instance of the problem. 
    \\ \\
    $(\Rightarrow)$ If we have a solution to an instance of the \textsc{Min-$k$-Partition} problem, then by Definition \ref{def13}, we have a partition $\mathcal{V} = \{V_1, V_2, \dots, V_{k}\}$ of the vertex set $V$ such that the expression $\sum\limits_{t=1}^{k}\sum\limits_{i,j \in V_t} \omega((i, j))$ is minimized. By our construction, the set of vertices $V$ is the set of users $U$ and $k$ is the cardinality of the set $\Psi$, $\tau$. Replacing $k$ by $\tau$ and $i, j$ by the arbitrary indices $k, k'$ to denote the users in our constructed instance of the PA problem, we get that the minimized equation is $\sum\limits_{t=1}^{\tau}\sum\limits_{k,k' \in V_t} \omega((k, k'))$. Recall that the objective function to be minimized in a general instance of \textsc{PA} is Equation (\ref{eq6}), which can be simplified to 
    \begin{align*} 
        &\sum\limits_{t=1}^{\tau}\sum\limits_{k, k' \in V_t} \Biggl(\sum\limits_{m \in A(k)} \left(\beta_{k'm}/\beta_{km}\right)^2 + \sum\limits_{m' \in A(k')} \left(\beta_{km'}/\beta_{k'm'}\right)^2 \Biggl) \\ 
        &= \sum\limits_{t=1}^{\tau}\sum\limits_{k, k' \in V_t} \Biggl( \left(\beta_{k'k}/\beta_{kk}\right)^2 + \left(\beta_{kk'}/\beta_{k'k'}\right)^2 \Biggl)
        \\
        &= \sum\limits_{t=1}^{\tau}\sum\limits_{k, k' \in V_t} \omega((k, k')) 
    \end{align*}
    \noindent   Thus we have a solution to our \textsc{PA} problem instance. \\ \\    
    $(\Leftarrow)$ By the above simplification of Equation (\ref{eq6}), if we have a solution for the above-constructed instance of \textsc{PA}, we end up minimizing the expression \[\sum\limits_{t=1}^{\tau}\sum\limits_{k, k' \in V_t} \omega((k, k'))\] Now, we map the users $k, k'$ back to the vertices of the graph via arbitrary indices $i, j$, and note that the number of pilots $\tau$ is in fact the desired number of cuts, $k$. So we end up minimizing the expression \[\sum\limits_{t=1}^{k}\sum\limits_{i,j \in V_t} \omega((i, j))\] which represents the total weight of such edges which have both endpoints in the same partition. Thus, due to the construction of our \textsc{PA} instance, we also have a solution to the corresponding \textsc{Min-$k$-Partition} instance.  
    \\ \\
    Therefore, the \textsc{Min-$k$-Partition} instance has a solution \emph{if and only if} if the above \textsc{PA} instance has a solution. This proves our claim. Since \textsc{Min-$k$-Partition} is NP-hard, we conclude that \textsc{PA} is NP-hard. \\

    \item It is stated in \cite{FAIRBROTHER201797} that \textsc{Min-$k$-Partition} is strongly NP-hard due to a reduction from the \textsc{Graph $k$-Colourability} problem, which has been proven to be strongly NP-complete \cite{GJ1978}. As we could not find a reference for an explicit proof of this fact in the literature, we give a simple proof of the aforementioned reduction:
    The \textsc{Chromatic Number} or \textsc{Graph $k$-Colourability} is a non-numeric decision problem which asks whether the vertices of a graph $G$ can be coloured using at most $k$ colours such that no two adjacent vertices have the same colour. In other words, it asks if we can partition the vertices into at most $k$ sets such that each of the sets is an independent set. \\
    
    We give a simple Turing reduction from  \textsc{Graph $k$-Colourability} to \textsc{Min-$k$-Partit\-ion} as follows: Given an instance $(G, E, k)$ of the \textsc{Graph $k$-Colou\-rability} problem, associate to it an instance of \textsc{Min-$k$-Partition} defined by $G = (V, E, \omega)$, such that $\omega : E \rightarrow 1$. It is easy to see that $G$ is $k$-colourable if and only if the optimal solution to the associated instance of \textsc{Min-$k$-Partition} yields a value of $0$. This is due to the fact that a solution to the \textsc{$k$-Colouring} problem on $G$ induces $k$ independent sets in $G$. In our instance of \textsc{Min-$k$-Partition}, these $k$ independent sets translate to $k$ subsets of the set of vertices $V$ of the graph, such that no two vertices in a given subset are adjacent. Thus, the endpoints of any edge in $G$ must be in two different sets. This gives us the minimum possible value of the sum of the weights of the edges with endpoints in the same partition: $0$. Thus, we see that independent sets in $G$ give us the optimum solution to the constructed instance of \textsc{Min-$k$-Partition}. On the other hand, if the solution to our \textsc{Min-$k$-Partition} instance has a value of $0$, then all the edges in $G$ have endpoints in disjoint sets of the partition. This follows from the fact that the weight of each edge is 1. Thus, the vertices in a given subset are pairwise disjoint, forming an independent set. Colouring the $k$ independent sets in this partition of $V$ using $k$ different colours gives us the desired solution to the \textsc{Graph $k$-Colourability} problem.  \qed
    
\end{enumerate}
\end{proof}

\noindent A corollary of the above theorem follows immediately: 
\begin{corollary} \label{cor1}
For every $q \in \mathbb{Q}_+$, the decision version of \textsc{PA} is \textup{NP-complete}. 
\end{corollary}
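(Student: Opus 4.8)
The plan is to first pin down the decision version of \textsc{PA}: given a CF-mMIMO system $S$ together with a rational bound $q \in \mathbb{Q}_+$, decide whether there is a feasible pilot assignment $f$ whose objective value in Equation~(\ref{eq6}) is at most $q$. Proving NP-completeness then splits cleanly into two tasks, membership in NP and NP-hardness, both of which I expect to follow almost directly from Theorem~\ref{thm1}.

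For membership, I would take a feasible assignment $f : U \to \Psi$ itself as the certificate. By Lemma~\ref{lem1} its feasibility (well-definedness and surjectivity) is checkable in polynomial time, and by Theorem~\ref{thm11} the value of Equation~(\ref{eq6}) is computable in polynomial time, after which comparison against $q$ is immediate. As already noted in the proof of Theorem~\ref{thm11}, such an $f$ has encoding length bounded by that of $S$, so this is a polynomial-size, polynomial-time-verifiable certificate and the decision version lies in NP.

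For NP-hardness I would reuse, essentially verbatim, the reduction from the proof of Theorem~\ref{thm12}. That construction maps a \textsc{Min-$k$-Partition} instance on a weighted graph $G$ to a \textsc{PA} instance so that the value of Equation~(\ref{eq6}) for any assignment equals exactly the total within-part edge weight of the corresponding partition. Hence $G$ admits a $k$-partition of total within-part weight at most $q$ if and only if the constructed \textsc{PA} instance admits a feasible assignment of value at most $q$, so the bound $q$ carries over unchanged. Since the decision version of \textsc{Min-$k$-Partition} is NP-complete (in NP via the partition as certificate, and NP-hard because \textsc{Min-$k$-Partition} is), this value-preserving reduction makes the decision version of \textsc{PA} NP-hard, and combining this with membership in NP yields NP-completeness.

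The main point requiring care is interpretational rather than technical: one must be explicit about the sense of ``for every $q$.'' If $q$ is supplied as part of the input, the paragraph above settles the claim at once. If instead one reads the statement as asserting hardness for each \emph{fixed} rational threshold $q$, then I would compose the reduction of Theorem~\ref{thm13} (\textsc{Graph $k$-Colourability} to \textsc{Min-$k$-Partition}) with that of Theorem~\ref{thm12}, but giving every edge a common weight $w > q$ rather than $1$: a $k$-colourable graph then yields optimal value $0 \le q$, whereas a non-colourable graph forces at least one monochromatic edge and hence optimal value $\ge w > q$. Thus deciding ``is the optimum at most $q$?'' coincides with $k$-colourability for every fixed $q$, which settles this reading as well. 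Either interpretation reduces to the value-preserving correspondence already in hand, so I anticipate no substantive obstacle beyond making the definition of the decision version precise.
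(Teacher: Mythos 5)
Your proof is correct and matches the paper's intended argument: the paper offers no explicit proof, stating only that the corollary ``follows immediately'' from Theorem~\ref{thm1}, and your fleshing-out---membership in NP via the NPO properties (a feasible assignment as a polynomial-size certificate, with feasibility checkable by Lemma~\ref{lem1} and Equation~(\ref{eq6}) evaluable in polynomial time per Theorem~\ref{thm11}), plus NP-hardness via the value-preserving reduction of Theorem~\ref{thm12} from the NP-complete decision version of \textsc{Min-$k$-Partition}---is exactly that implicit argument. Your handling of the fixed-$q$ reading (rescaling the edge weights to some $w > q$ in the colourability-based reduction, since the paper's unit-weight construction only separates thresholds $q < 1$) is a careful addition that the paper glosses over, but it refines rather than changes the approach.
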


\subsection{Further results on approximability} \label{subsec32}
The first theorem tells us that unless P $=$ NP, there exists no polynomial-time algorithm to solve \textsc{PA}. We turn our attention to the more practical aspects of tackling this problem. Instead of trying to obtain optimal solutions, we look at how well we can approximate the optimal solution to the problem in polynomial time. Since we are dealing with optimization problems (as opposed to decision problems), we must be careful when giving a polynomial-time reduction from one optimization problem to the other. While the decision problems corresponding to most NP-hard optimization problems admit a standard polynomial-time many-one (or Karp) reduction to each other, the optimization problems do not share the same approximability properties. This is due to the fact that many-one reductions do not always preserve the measure function and, even if they do, they seldom preserve the quality of the solutions. 
This is reflected in the approximability of the equivalent \textsc{Min-$k$-Partition} and \textsc{Max-$k$-Cut} problems as well. We know that while \textsc{Max-$k$-Cut} is APX-complete,  \textsc{Min-$k$-Partition} is not in APX (see Appendix B of \cite{Ausiello1999}).  \\ \\
Thus, we appeal to Definition \ref{def9} to introduce a stronger kind of reducibility, namely AP-reducibility. Although different types of approximation-preserving reducibilities exist in the literature, AP-reducibility is sufficiently general to incorporate the properties of almost all such reducibilities, while also establishing a \emph{linear relation} between performance ratios.
Approximation-preserving reductions induce an order on optimization problems based on their ``difficulty'' of being approximated. Approximation-preserving reductions are also an essential tool for proving non-approximability results. \\ \\
Next, we state four lemmas, which will help us prove our second theorem.

\begin{lemma}[\cite{Ausiello1999}]\label{lem2}
If \textup{P $\neq$ NP}, \textup{exp-APX $\subset$ NPO}.  
\end{lemma}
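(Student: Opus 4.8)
The plan is to prove the two halves of the (strict) inclusion separately, since only one of them carries any real content. The inclusion $\textup{exp-APX} \subseteq \textup{NPO}$ is immediate from Definition \ref{def8}: for any class of functions $F$ (in particular the class of functions bounded by $2^{p(n)}$ for a polynomial $p$), every problem in $F$-APX is \emph{by definition} already an NPO problem. Thus the entire substance of the lemma is the \emph{strictness}, i.e.\ exhibiting, under $\textup{P} \neq \textup{NP}$, a single problem lying in $\textup{NPO} \setminus \textup{exp-APX}$. Before constructing it, I would record the structural observation that dictates what such a witness must look like: because the measure of any feasible solution is a \emph{positive} rational whose encoding has length polynomial in $|x|$ (condition 3 of Definition \ref{def2} together with Definition \ref{def1}), the optimum $m^*(x)$ lies between $2^{-q(|x|)}$ and $2^{q(|x|)}$ for some polynomial $q$, so \emph{any} feasible solution one can display is automatically within an exponential factor of the optimum. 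Consequently a ``large value gap'' cannot separate exp-APX from NPO; the only possible source of hardness is that producing \emph{any} feasible solution (on instances whose solution set is nonempty) is itself intractable.

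Guided by this, I would take as witness the minimization problem $\Pi$ whose instances are $3$-CNF formulas $\phi$, whose feasible set $SOL_\Pi(\phi)$ is exactly the set of satisfying truth assignments of $\phi$, and whose measure is the constant $m_\Pi(\phi,\sigma) = 1$. Verifying $\Pi \in \textup{NPO}$ against Definition \ref{def2} is routine: valid $3$-CNF formulas are recognizable in polynomial time; every candidate assignment has length polynomial in $|\phi|$; feasibility of a candidate $\sigma$ is decided in polynomial time by evaluating $\phi$ on $\sigma$; and the constant measure is trivially polynomial-time computable and positive as required by Definition \ref{def1}.

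It then remains to show $\Pi \notin \textup{exp-APX}$ unless $\textup{P} = \textup{NP}$. I would argue by contradiction: suppose $\mathcal{A}$ is a polynomial-time $r(n)$-approximate algorithm for $\Pi$ with $r(n) \leq 2^{p(n)}$ for some polynomial $p$. For a \emph{satisfiable} $\phi$ we have $SOL_\Pi(\phi) \neq \emptyset$, so by Definition \ref{def6} the output $\mathcal{A}(\phi)$ must be a feasible solution, that is, an actual satisfying assignment of $\phi$. Hence, given any $\phi$, I would run $\mathcal{A}(\phi)$ and test in polynomial time whether the returned assignment satisfies $\phi$: a satisfying output certifies satisfiability, whereas a non-satisfying output certifies $SOL_\Pi(\phi) = \emptyset$ (a correct $\mathcal{A}$ would have returned a satisfying assignment had one existed), i.e.\ that $\phi$ is unsatisfiable. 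This solves the search version of \textsc{Sat} in polynomial time, so by the NP-completeness of \textsc{Sat} we obtain $\textup{P} = \textup{NP}$, contradicting the hypothesis. Therefore $\Pi \in \textup{NPO} \setminus \textup{exp-APX}$, which yields $\textup{exp-APX} \subsetneq \textup{NPO}$.

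The main obstacle here is conceptual rather than computational, and it is precisely the point isolated in the first paragraph: one must resist the instinctive ``huge-value-gap'' construction (which the polynomial encoding of measures rules out entirely) and instead place all the hardness in the \emph{feasibility} of the witness problem. Once that design choice is made, the remaining work---checking the NPO conditions and reducing \textsc{Sat} to the bare task of emitting a feasible solution---is straightforward.
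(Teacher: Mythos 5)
Your proposal is correct and takes essentially the same route as the paper's proof: both observe that every NPO problem is exponentially approximable \emph{provided} a feasible solution can be produced (so no ``value gap'' can separate the classes), and both therefore locate the strictness in an NPO problem whose feasibility is itself NP-hard, so that any exp-APX algorithm --- which must emit a feasible solution whenever one exists --- would decide an NP-complete problem. The only difference is the choice of witness: the paper cites \textsc{Minimum $\{0,1\}$-Linear Programming} (whose feasibility is NP-hard via a reduction from \textsc{Satisfiability}), whereas you construct an equivalent witness directly from $3$-CNF formulas with satisfying assignments as feasible solutions and constant measure; the underlying argument is identical.
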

\begin{proof}
While this fact is stated in \cite{Ausiello1999}, we give a detailed proof (with an example mentioned in \cite{Ausiello1999}) as follows: \\

 Let $P = (I, \, SOL, \, m, \, \textup{goal})$ be a problem in NPO. Since $m$  is computable in polynomial time, there exist $h$ and $k$ such that for any $x \in I$ with $|x| = n$ and for any $y \in SOL(x)$, $m(x,y) \leq h2^{n^k}$. This is because the range of possible values of $m(x,y)$ has an upper bound given by $M = 2^{p(|x|)}$ for some polynomial $p$, which is again due to the properties of NPO problems which state that the length $|y|$ of any solution $y \in SOL(x)$ is bounded by $q(|x|)$ for some polynomial $q$, and $m$ is computable in polynomial time (see Definition \ref{def9}). This implies that any feasible solution has a performance ratio bounded by $h2^{n^k}$. Indeed, the polynomial bound on the computation time of the measure function for all NPO problems implies that they are $h2^{n^k}$-approximable for some $h$ and $k$. This seems to imply that the classes exp-APX and NPO are the same. However, we note that there exist several problems in NPO for which it is hard even to decide whether any feasible solution exists, (and thus to find such a feasible solution) unless P $=$ NP. An example of such a problem is the \textsc{Minimum $\{0,1\}$-Linear Programming} problem, which belongs to NPO. The problem instance consists of a matrix $A \in \mathbb{Z}^{m \times n}$ and vectors $b \in \mathbb{Z}^{m}$, $w \in \mathbb{N}^{n}$. The problem asks for a solution $x \in \{0,1\}^n$ such that $Ax \geq b$, and the measure function $\sum\limits_{i=1}^n w_ix_i$ is minimized. Given an integer matrix $A$ and an integer vector $b$, deciding whether a binary vector $x$ exists such that $Ax \geq b$ is NP-hard, as the \textsc{Satisfiability} problem is polynomial-time reducible to this decision problem (see Example 1.10 in Section 1.3.1, Chapter 1 of \cite{Ausiello1999}). This implies that, if P $\neq$ NP, then \textsc{Minimum $\{0,1\}$-Linear Programming} does not belong to exp-APX.  \qed 
\end{proof}
\vspace{1.5mm}

\begin{lemma} \label{lem3}
The polynomial-time reduction from \textsc{Min-$k$-Partition} to \textsc{PA} given in Theorem \ref{thm12} is an AP-reduction with $\alpha = 1$. 
\end{lemma}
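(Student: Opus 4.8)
The plan is to exhibit the pair of functions $(f, g)$ together with the constant $\alpha = 1$ required by Definition \ref{def9}, where $f$ is precisely the instance-construction already described in the proof of Theorem \ref{thm12} (mapping a weighted graph $G = (V, E, \omega)$ to the CF-mMIMO system $S$ with $K = |V|$ users, $\tau = k$ pilots, singleton serving sets $A(i) = \{i\}$, and the stated $\bm{\beta}$ matrix), and where $g$ maps a feasible pilot assignment back to the partition it induces: given a feasible $y \colon U \to \Psi$, we set $g(x, y) = \{V_1, \dots, V_k\}$ with $V_t = y^{-1}(t)$. Here the source problem is $P_1 = $ \textsc{Min-$k$-Partition} and the target is $P_2 = $ \textsc{PA}. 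The heart of the argument is the measure identity already computed in the proof of Theorem \ref{thm12}: for every feasible $y$, we have $m_{P_2}(f(x), y) = \sum_{t=1}^{k}\sum_{i,j \in V_t}\omega((i,j)) = m_{P_1}(x, g(x,y))$, so the two measures coincide pointwise under the correspondence $g$.

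First I would dispatch conditions $(1)$--$(4)$ of Definition \ref{def9}, all of which are routine. Condition $(1)$ holds since $f(x)$ is a well-formed CF-mMIMO system by construction. For $(2)$, whenever the \textsc{Min-$k$-Partition} instance is nonempty we have $|V| = K \gg \tau$, so by Lemma \ref{lem1} a feasible (surjective) pilot assignment exists, hence $SOL_{P_2}(f(x)) \neq \emptyset$. Condition $(3)$ holds because any feasible $y$ is surjective, so the preimages $y^{-1}(1), \dots, y^{-1}(k)$ form a genuine partition of $V$ into $k$ parts, i.e.\ a feasible \textsc{Min-$k$-Partition} solution in the sense of Definition \ref{def13}. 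Condition $(4)$ is immediate: building $S$ merely reads off the graph, and $g$ just records the preimages, so both are computable in time polynomial in $|x|$.

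The substance is condition $(5)$, the AP-condition, and here the measure identity does almost all the work. Using it, I would next argue that the two optima agree, $m^*_{P_2}(f(x)) = m^*_{P_1}(x)$. The inequality $m^*_{P_2}(f(x)) \geq m^*_{P_1}(x)$ is immediate since each $g(x,y)$ is a feasible $P_1$ solution of equal value; for the reverse I would invoke non-negativity of the weights together with $K \gg \tau$, refining any optimal partition with fewer than $k$ nonempty parts into one with exactly $k$ nonempty parts without increasing its value (splitting a part only removes intra-part edges), such a refinement being realizable as a surjective $y$. With $m^*_{P_2}(f(x)) = m^*_{P_1}(x)$ and the pointwise identity of measures, the performance ratios of Definition \ref{def5} agree exactly: $R_{P_2}(f(x), y) = R_{P_1}(x, g(x,y))$. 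Consequently $R_{P_2}(f(x), y) \leq r$ yields at once $R_{P_1}(x, g(x,y)) \leq r = 1 + 1\cdot(r-1)$, which is precisely the AP-condition with $\alpha = 1$.

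I expect the main obstacle to be the bookkeeping around surjectivity: \textsc{PA} forces all $\tau$ pilots to be used, whereas a \textsc{Min-$k$-Partition} solution may leave parts empty, so one must verify that this discrepancy neither breaks the back-map $g$ nor shifts the optimum---exactly the monotonicity-plus-$K \gg \tau$ argument above. A secondary point worth flagging is the degenerate case $m^*_{P_1}(x) = 0$ (which occurs precisely when $G$ is $k$-colourable), where the ratio $m^*/m$ in Definition \ref{def5} is delicate; I would handle it by restricting attention to instances of positive measure, consistent with the positivity convention of Definition \ref{def1}, so that the equality of ratios---and hence the AP-condition---is well defined throughout.
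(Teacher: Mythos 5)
Your proposal is correct and takes essentially the same route as the paper's own proof: the same instance map $f$ from Theorem \ref{thm12}, the same back-map $g$ sending a feasible assignment to the partition of preimages, and the pointwise measure identity $m_{\textsc{PA}}(f(x),y) = m_{\textsc{M$k$P}}(x,g(x,y))$ driving the AP-condition with $\alpha = 1$. If anything you are more careful than the paper, which asserts the measure equality ``by the nature of the constructed instances'' and stops: your refinement argument establishing that the two optima coincide (reconciling the surjectivity required of feasible pilot assignments with possibly-empty parts in a \textsc{Min-$k$-Partition} solution, using non-negative weights and $K \gg \tau$) and your handling of the degenerate zero-optimum case fill steps the paper leaves implicit but that are genuinely needed for the performance ratios to transfer.
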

\begin{proof}
Consider an instance $G = (V, E, \omega)$ of \textsc{Min-$k$-Partition}, we need to determine the function $f$ that maps it to an instance of \textsc{PA}. A general instance of \textsc{PA} is determined by the tuple $\left(\mathcal{A},U,\{A(k)\}_{k = 1}^{K},\bm{\beta},\Psi\right)$. Recall the definition of AP-reduction from Definition \ref{def9}. From our reduction, we get that $f : (V, E, \omega) \mapsto \left(\{1, \dots, |V|, \dots, M\}, \, V, \, \{i\}_{i = 1}^{|V|}, \, \bm{\beta}, \, \{1, \dots, k\}\right)$, where $M$ is an arbitrary constant such that $M \gg |V|$, and the $im$-th element of the matrix $\bm{\beta}$ is $\beta_{im} = 1$ if $m = i$, $\sqrt{\frac{\omega((i, m))}{2}}$ if $m \neq i$ and $m \leq K$, and 0 otherwise. Moreover, if $h$ is the feasible assignment that forms the solution of our constructed instance of \textsc{PA}, then the partition of vertices $\mathcal{V} = \{V_1, V_2, \dots, V_k\}$ that forms the solution of our original \textsc{Min-$k$-Partition} instance is defined as $V_t = \{i \, | \, h(i) = t \}$ where $i \in V$. We set $g : (G, h) \mapsto \mathcal{V}$, where  $V_t = \{k \, | \, k \in U \; \text{and} \; h(k) = t \} \quad \forall \; 1 \leq t \leq \tau$.  Notice that $f$ and $g$ do not depend on the performance ratio. \\

Finally, for any instance $G$ of \textsc{Min-$k$-Partition}, with $S = f(G)$ as the constructed instance of \textsc{PA}, $h$ as the solution of $S$, and $\mathcal{V}= g(G, \mathcal{V})$ as the solution to the original instance $G$ of \textsc{Min-$k$-Partition}, we see that $m_{\textsc{M$k$P}}(G, \mathcal{V}) = m_{\textsc{PA}}(S, h)$ (where \textsc{M$k$P} is short for \textsc{Min-$k$-Partition}), by the nature of the constructed instances and solutions. We conclude that $f$ and $g$ satisfy the AP-condition with $\alpha = 1$, and thus the polynomial-time reduction described above is an AP-reduction.  \qed
\end{proof}

\vspace{1.5mm}

\begin{lemma}\label{lem4}
\textup{\textsc{PA} $\leq_{AP}$ \textsc{Min-$k$-Partition}} with $\alpha =1$. 
\end{lemma}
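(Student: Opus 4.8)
The plan is to exhibit an AP-reduction $(f, g, 1)$ that essentially inverts the one from Lemma \ref{lem3}. Given a CF-mMIMO system $S = \left(\mathcal{A},U,\{A(k)\}_{k = 1}^{K},\bm{\beta},\Psi\right)$, I would let $f$ send $S$ to the weighted graph $G = (V, E, \omega)$ on vertex set $V = U$, with desired number of parts $k = \tau = |\Psi|$, placing an edge between two distinct users $k, k'$ whenever their interference is positive and giving it the weight
\[ \omega((k,k')) = \sum_{m \in A(k)}\left(\beta_{k'm}/\beta_{km}\right)^2 + \sum_{m' \in A(k')}\left(\beta_{km'}/\beta_{k'm'}\right)^2. \]
This is precisely the pairwise interference term of Equation (\ref{eq5}); it is symmetric in $k$ and $k'$, so $\omega$ is well defined on undirected edges, it is a (strictly positive) rational under the standing assumption that $\beta_{km} > 0$ whenever $m \in A(k)$, and it is computable in polynomial time. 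As in Remark \ref{rem1}, $f$ does not depend on the performance ratio $r$.

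For the solution map $g$, given a partition $\mathcal{V} = \{V_1, \dots, V_\tau\}$ returned for $G$, I would define the candidate assignment $f_h(u) = t$ whenever $u \in V_t$. The one point requiring care is that a feasible pilot assignment must be surjective (Definition \ref{def11}), whereas a Min-$k$-Partition solution may leave some part $V_t$ empty. Since $K \gg \tau$, I would repair this by repeatedly moving a user out of any part of size at least two into an empty part; because all weights are nonnegative, deleting a vertex from a part only removes within-part edges, while placing it alone in a new part introduces none, so each such move can only decrease the measure. This yields a genuinely feasible $g(S,\mathcal{V}) \in SOL_{\textsc{PA}}(S)$ satisfying $m_{\textsc{PA}}(S, g(S,\mathcal{V})) \leq m_{\textsc{M$k$P}}(G, \mathcal{V})$, and $g$ again does not depend on $r$.

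Next I would establish that the two objectives agree on corresponding solutions: by the derivation leading to Equation (\ref{eq6}), for any surjective $f$ with induced partition $\mathcal{V}$ one has $m_{\textsc{PA}}(S, f) = \sum_{t=1}^{\tau}\sum_{k,k' \in V_t} \omega((k,k')) = m_{\textsc{M$k$P}}(G, \mathcal{V})$. Combining this exact matching with the repair argument—an optimal Min-$k$-Partition solution can always be taken to use all $\tau$ parts, since splitting vertices off into empty parts never increases the weight—shows that the two optima coincide, $m^*_{\textsc{PA}}(S) = m^*_{\textsc{M$k$P}}(G)$.

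Finally, the AP-condition holds with $\alpha = 1$: if $R_{\textsc{M$k$P}}(G, \mathcal{V}) \leq r$ then $m_{\textsc{M$k$P}}(G, \mathcal{V}) \leq r\, m^*_{\textsc{M$k$P}}(G) = r\, m^*_{\textsc{PA}}(S)$, whence $m_{\textsc{PA}}(S, g(S,\mathcal{V})) \leq r\, m^*_{\textsc{PA}}(S)$ and therefore $R_{\textsc{PA}}(S, g(S,\mathcal{V})) \leq r = 1 + 1\cdot(r-1)$. I expect the main obstacle to be exactly the surjectivity/feasibility subtlety: guaranteeing that $g$ always outputs a feasible (surjective) pilot assignment, and checking that the repair step both keeps the image inside $SOL_{\textsc{PA}}$ and does not disturb the clean equality of optima that $\alpha = 1$ relies on. Once that is settled, the measure-matching and the final ratio computation are routine.
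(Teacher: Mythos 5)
Your proposal is correct and follows essentially the same route as the paper's own proof: the same weighted graph on vertex set $U$ with $k=\tau$ and the same pairwise-interference edge weights $\omega((k,k'))$, together with the same solution map sending a partition back to the induced pilot assignment, yielding an exact matching of measures and hence the AP-condition with $\alpha=1$. The one place you go beyond the paper is the surjectivity repair: the paper's $g$ silently assumes every part $V_t$ is non-empty, whereas you explicitly move a vertex out of a part of size at least two into any empty part (which never increases the objective, since all weights are nonnegative) so that $g$ always outputs a feasible assignment in the sense of Definition \ref{def11} --- a worthwhile patch to the paper's argument rather than a different approach.
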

\begin{proof}
    Since we've proved that the decision version of \textsc{PA} is NP-complete, it follows from the definition of NP-completeness that \textsc{PA} and \textsc{Min-$k$-Partition} are equivalently hard. We first give an explicit polynomial-time reduction from \textsc{PA} to \textsc{Min-$k$-Partition}, and then prove that it is in fact an AP-reduction. \\

Consider an arbitrary instance of \textsc{PA}, specified by the cell-free massive MIMO system $S = (\mathcal{A},U,A(k),\bm{\beta},\Psi)$ with $|\mathcal{A}| = M, \, |U| = K, \, |\Psi| = \tau$. 
We construct an instance of \textsc{Min-$k$-Partition} as follows:
Define a \emph{complete} graph $G = (V, E, \omega)$ (where $\omega : E \rightarrow \mathbb{R}_+$ is a function that maps the edges in our graph to positive real values), by setting $V = U$. By construction, we have $|V| = K$ and $|E| = \frac{K(K+1)}{2}$. For any edge $(k,k') \in E$, where $k, k' \in V$, we set \[\omega((k, k')) =  \sum\limits_{m \in A(k)} \left(\beta_{k',m}/\beta_{k,m}\right)^2 + \sum\limits_{m' \in A(k')} \left(\beta_{k,m'}/\beta_{k',m'}\right)^2. \] The weight function $\omega$ defined above satisfies a symmetric property, i.e., $\omega((k, k')) = \omega((k', k))$. It is important to note that we set the value of $k$ that is referred to in the title of the \textsc{Min-$k$-Partition} problem to $\tau$. So we have constructed an instance of a \textsc{Min-$\tau$-Partition} problem. \\

($\Rightarrow$) If we have a solution to an instance of \textsc{PA}, then we have minimized the optimization function given by Equation (\ref{eq6}), which we state again:  \[ \sum\limits_{t=1}^{\tau}\sum\limits_{k, k' \in V_t} \Biggl(\sum\limits_{m \in A(k)} \left(\beta_{k'm}/\beta_{km}\right)^2 + \sum\limits_{m' \in A(k')} \left(\beta_{km'}/\beta_{k'm'}\right)^2 \Biggl) \]
But notice that the objective function to be minimized in a general instance of the \textsc{Min-$k$-Partition} problem is $\sum\limits_{t=1}^{k}\sum\limits_{i,j \in V_t} \omega((i, j))$. By the construction of our instance of the \textsc{Min-$k$-Partition} problem, the users $U$ form the vertices $V$ of the graph and the cardinality $\tau$ of the set of pilots $\Psi$ is $k$ (which is in the title of the problem, denoting the number of subsets of the vertices). Replacing $\tau$ by $k$ and the user indices $k,k'$ by the arbitrary indices $i,j$ to denote the vertices of our graph, we get that the function minimized by solving the \textsc{PA} instance is   
\begin{align*} 
    \sum\limits_{t=1}^{k}\sum\limits_{i, j \in V_t} \Biggl(\sum\limits_{m \in A(i)} \left(\beta_{jm}/\beta_{im}\right)^2 + \sum\limits_{m' \in A(j)} \left(\beta_{im'}/\beta_{jm'}\right)^2 \Biggl)
    =& \sum\limits_{t=1}^{k}\sum\limits_{i,j \in V_t} \omega((i, j))  
\end{align*}
Therefore, we have a solution to our instance of the \textsc{Min-$\tau$-Partition} problem. \\

($\Leftarrow$) By the above argument, it is clear that if we have a solution to our constructed instance of the \textsc{Min-$\tau$-Partition} problem, we also have a solution to the corresponding \textsc{PA} instance by a reverse change of indices, as seen in Theorem \ref{thm12}. \\

Hence, we have shown an explicit polynomial-time reduction from \textsc{PA} to \textsc{Min-$k$-Partition}. What remains to be verified is that this is an AP-reduction. To see this, consider an instance of \textsc{PA} given by $S = \left(\mathcal{A},U,\{A(k)\}_{k = 1}^{K},\bm{\beta},\Psi\right)$. A general instance of \textsc{Min-$k$-Partition} is given by an undirected graph $G = (V, E, \omega)$. In the above reduction, we have that $G$ is a complete graph with $V = U$ and $\omega((i, j)) =  \sum\limits_{m \in A(i)} \left(\beta_{j,m}/\beta_{i,m}\right)^2 + \sum\limits_{m' \in A(j)} \left(\beta_{i,m'}/\beta_{j,m'}\right)^2 \quad \forall \: i,j \in V$. 
Recall Definition \ref{def9}. Thus we have $ f : \left(\mathcal{A},U,\{A(k)\}_{k = 1}^{K},\bm{\beta},\Psi\right) \mapsto \left(U, E(K_U), w \right)$ where $E(K_U)$ denotes the set of edges of the complete graph on the vertices denoted by $U$, and $\omega((k, k')) = \sum\limits_{m \in A(k)} \left(\beta_{k',m}/\beta_{k,m}\right)^2 + \sum\limits_{m' \in A(k')} \left(\beta_{k,m'}/\beta_{k',m'}\right)^2$. Further, if we have obtained the partition $\mathcal{V} = \{V_1, V_2, \dots, V_{k}\}$ of the vertex set $V$ in our constructed instance of \textsc{Min-$k$-Partition}, we get that the feasible assignment $h$ which is the solution to the corresponding \textsc{PA} instance is defined as $h(k) = t \ \text{such that} \; k \in V_t \quad \forall \: k \in U$. We set $g : (S, \mathcal{V}) \mapsto h$, where $h(i) = t \ \text{such that} \ i \in V_t \in \mathcal{V}$. Notice that $f$ and $g$ do not depend on the performance ratio.\\
Finally, for any instance $S$ of \textsc{PA}, with $G = f(S)$ as the constructed instance of \textsc{Min-$k$-Partition}, $\mathcal{V}$ as the solution of $G$, and $h = g(S, \mathcal{V})$ as the solution to the original instance $S$ of \textsc{PA}, we see that $m_{\textsc{PA}}(S, h) = m_{\textsc{M$k$P}}(G, \mathcal{V})$ (where \textsc{M$k$P} is short for \textsc{Min-$k$-Partition}), by the nature of the constructed instances and solutions. We conclude that $f$ and $g$ satisfy the AP-condition with $\alpha = 1$, and thus the polynomial-time reduction described above is an AP-reduction.  \qed
\end{proof}
\vspace{3mm}

Observe that an AP-reduction from \textsc{PA} to \textsc{Min-$k$-Partition} is important from a practical viewpoint to prove positive approximation results for \textsc{PA}. In particular, it allows us to translate the performance ratios that exist for the approximation of special cases of \textsc{Min-$k$-Partition} (e.g., when $k$ is $2$ or $3$) into performance ratios for the corresponding \textsc{PA} problem \cite{KKLP1997}. On the other hand, an AP-reduction from \textsc{Min-$k$-Partition} to \textsc{PA} is necessary to prove negative approximability results for \textsc{PA}. We recall a few results on the approximability of \textsc{Min-$k$-Partition} stated in \cite{Eisenblaetter2001,KKLP1997} in the form of a  lemma.

\vspace{2mm}

\begin{lemma}\label{lem5}
Assuming \textup{P $\neq$ NP}, the following statements hold with respect to the computational complexity of \textsc{Min-$k$-Partition}:
\begin{enumerate}[label={$(\roman*)$}, ref=\thelemma.(\roman*)]
     \item The problem is not in \textup{APX}. \textup{\cite{SG1976}} \label{lem51}
     \item For $k \geq 3$, it is NP-hard to approximate \textup{\textsc{Min-$k$-Partition}} within $O(|E|)$, even when restricting the instances to graphs with $|E| = \Omega(|V|^{2-\epsilon})$, for a fixed $\epsilon$, $0 < \epsilon < 1$. \textup{\cite{KKLP1997}} \label{lem52}
     \item No polynomial time algorithm can achieve a better performance ratio than 1.058 in the case of $k=2$. \textup{\cite{Has2001}} \label{lem53}
    \item In case of $k = 2$, a polynomial time algorithm with a performance guarantee of $\log |V|$ is known. \textup{\cite{GVY1996}}  \label{lem54}
    \item In case of $k=3$, a polynomial time algorithm with a performance guarantee of $\epsilon|V|^2$ for any $\epsilon > 0$ is known. \textup{\cite{KKLP1997}} \label{lem55}  
\end{enumerate}
\end{lemma}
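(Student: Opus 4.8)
The plan is to establish the five items largely by appeal to the cited literature, supplying short self-contained arguments only where they illuminate the structure of \textsc{Min-$k$-Partition} or connect to results already proved here. First I would dispatch item \ref{lem51} (non-membership in \textup{APX}) by reusing the reduction from \textsc{Graph $k$-Colourability} built in the proof of Theorem \ref{thm13}. On that unit-weight instance the optimum of \textsc{Min-$k$-Partition} equals $0$ precisely when $G$ is $k$-colourable, and it is at least $1$ otherwise (some edge of weight $1$ is necessarily monochromatic). Hence any polynomial-time $c$-approximation algorithm, for a fixed constant $c \geq 1$, would be forced to return a solution of value $\leq c \cdot 0 = 0$ on every $k$-colourable graph and a solution of value $\geq 1$ on every other graph; inspecting whether its output value is $0$ would then decide $k$-colourability in polynomial time, contradicting the NP-hardness of that problem unless \textup{P $=$ NP}. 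This is exactly the obstruction that rules out constant-factor approximation whenever the optimal measure can vanish, and it is the phenomenon underlying \cite{SG1976}.

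Next I would obtain items \ref{lem52}--\ref{lem55} directly from the stated references, after checking that each applies to the nonnegative-rational-weight formulation of Definition \ref{def13}. For item \ref{lem52} I would invoke \cite{KKLP1997}, noting that their inapproximability construction already yields dense graphs with $|E| = \Omega(|V|^{2-\epsilon})$, so no extra work is needed to meet the density restriction. For the $k=2$ statements I would first make explicit the duality between \textsc{Min-$2$-Partition} and \textsc{Max-Cut}: minimising the total weight of same-class edges is equivalent to maximising the weight of the cut. Item \ref{lem53} is then the min-side translation of Håstad's optimal inapproximability threshold for \textsc{Max-Cut} \cite{Has2001}, item \ref{lem54} restates the $O(\log|V|)$ guarantee of \cite{GVY1996}, and item \ref{lem55} is the $\epsilon|V|^2$ guarantee of \cite{KKLP1997} for the $k=3$ case.

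The main obstacle is not any single deep argument — each fact is proved elsewhere — but rather the careful bookkeeping of hypotheses: confirming that the cited results, several of which are phrased for particular weight regimes or restricted graph families, genuinely transfer to the general weighted instances of Definition \ref{def13}, and that the measure conventions (in particular, summing over ordered versus unordered same-class pairs) are reconciled so that the quoted performance ratios carry over unchanged. The one place where I would work through the conversion explicitly, rather than quote it verbatim, is item \ref{lem53}: translating the precise \textsc{Max-Cut} hardness threshold of \cite{Has2001} through the duality above and verifying that it yields exactly the claimed lower bound of $1.058$ for the minimisation objective.
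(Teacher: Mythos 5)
The paper itself offers no proof of this lemma: it is stated purely as a recollection of known results, each item backed by a citation (following \cite{Eisenblaetter2001,KKLP1997}), so your proposal does strictly more than the paper does. Your self-contained argument for item \ref{lem51} is correct, and it is in fact the classical argument of Sahni and Gonzalez \cite{SG1976} (their ``$k$-min cluster'' problem \emph{is} \textsc{Min-$k$-Partition}): on the unit-weight instances produced by the \textsc{Graph $k$-Colourability} reduction of Theorem \ref{thm13}, the optimum is $0$ if and only if $G$ is $k$-colourable and is $\geq 1$ otherwise, so any constant-factor approximation would decide colourability. One formal caveat: when $m^*(x)=0$ the performance ratio of Definition \ref{def5} is undefined, so the argument is cleanest when phrased as ``it is NP-hard to distinguish optimum $0$ from optimum $\geq 1$, and a $c$-approximate algorithm would make that distinction''; phrased this way it even shows that no factor $r(n)$ whatsoever is achievable, which is stronger than non-membership in APX. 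Items \ref{lem52}, \ref{lem54} and \ref{lem55} are, as in the paper, pure citations, and your plan of checking that the cited weight regimes and pair conventions match Definition \ref{def13} is sensible.

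The one place where your plan as written would break is item \ref{lem53}. Approximation ratios do not transfer under complementation of objectives: knowing that \textsc{Max-Cut} is NP-hard to approximate within ratio $17/16 \approx 1.0625$ \cite{Has2001} says nothing, by itself, about the minimization problem whose value is $W$ minus the cut value ($W$ the total edge weight), because a hard gap $(c,s)$ for the maximization objective becomes the gap $(W-c,\,W-s)$ for the minimization objective, and the resulting ratio $(W-s)/(W-c)$ depends on where $c$ and $s$ sit relative to $W$, not merely on $s/c$. The two constants are genuinely different numbers: H{\aa}stad's \textsc{Max-Cut} threshold is $17/16 = 1.0625$, whereas the bound claimed in the lemma is $1.058 \approx 18/17$. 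To obtain $18/17$ one must open up the reduction (H{\aa}stad's E3-LIN-2 hardness composed with the Trevisan--Sorkin--Sudan--Williamson gadget), read off the completeness and soundness locations of the cut value relative to the total gadget weight, and compute the ratio of the complementary quantities. So ``translating the threshold through the duality'' in the naive sense would yield the wrong constant; the verification step you postponed to the end is not bookkeeping but the entire content of that item.
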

\noindent Using Lemmas \ref{lem2}, \ref{lem3}, \ref{lem4} and \ref{lem5}, we are now ready to prove the following theorem on the approximability of \textsc{PA}, inspired by \cite[Section 3.2.3]{Eisenblaetter2001}.

\begin{theorem}\label{thm2}
Assuming P $\neq$ NP, The following statements hold true for the Pilot Assignment problem:
\begin{enumerate}[label={$(\roman*)$}, ref=\thetheorem.(\roman*)]
    \item \textsc{PA} is not in \textup{APX}, however it is in \textup{exp-APX}.\label{thm21}
    \item An approximation of \textsc{PA} within $\mathcal{O}(K^2)$ for $\tau \geq 3$ is impossible in polynomial time. \label{thm22}
    \item In the special case of $\tau = 2$, while no polynomial time algorithm can achieve a performance ratio better than 1.058, there exists an algorithm with a performance guarantee of $\log |K|$ in polynomial time. \label{thm23}
    \item In the special case of $\tau = 3$, there exists a polynomial time algorithm with a performance guarantee of $\epsilon|K|^2$ for any $\epsilon > 0$. \label{thm24}
\end{enumerate}
\end{theorem}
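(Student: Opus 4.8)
The plan is to treat all four parts as consequences of a single structural fact: Lemmas \ref{lem3} and \ref{lem4} together establish AP-reductions between \textsc{Min-$k$-Partition} and \textsc{PA} in \emph{both} directions, each with $\alpha = 1$. Since the map $r \mapsto 1 + \alpha(r-1)$ reduces to the identity when $\alpha = 1$, the AP-condition transfers performance ratios \emph{exactly} rather than with loss, so the two problems are approximation-equivalent. Moreover, in both reductions the number of pilots $\tau$ is set equal to the partition parameter $k$, and the user count $K$ equals the vertex count $|V|$; consequently every clause of Lemma \ref{lem5} about a fixed value of $k$ transfers verbatim to \textsc{PA} with $\tau = k$, reading $K$ for $|V|$. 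I would record this principle once at the outset and then dispatch the four parts, using the \textsc{Min-$k$-Partition} $\le_{AP}$ \textsc{PA} direction (Lemma \ref{lem3}) to import \emph{lower} bounds and the \textsc{PA} $\le_{AP}$ \textsc{Min-$k$-Partition} direction (Lemma \ref{lem4}) to import \emph{algorithms}.

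For part \ref{thm21}, non-membership in APX follows from Lemma \ref{lem3} together with the standard fact that AP-reductions preserve membership in APX: a hypothetical polynomial-time $c$-approximation for \textsc{PA}, pulled back along the reduction, would approximate \textsc{Min-$k$-Partition} within $1 + 1 \cdot (c-1) = c$, contradicting Lemma \ref{lem51}. Membership in exp-APX I would read off directly from the argument inside the proof of Lemma \ref{lem2}: \textsc{PA} lies in NPO by Theorem \ref{thm11}, and by Lemma \ref{lem1} a feasible pilot assignment is computable in polynomial time, so the generic $h 2^{n^k}$-approximability bound for NPO problems with efficiently constructible feasible solutions applies and places \textsc{PA} in exp-APX.

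The remaining three parts split cleanly into hardness and algorithmic halves. For the hardness half of part \ref{thm23} and for part \ref{thm22} I would argue by contradiction through Lemma \ref{lem3}: a polynomial-time approximation of \textsc{PA} with ratio below $1.058$ at $\tau = 2$ (respectively within $\mathcal{O}(K^2)$ at $\tau \ge 3$) would, via the ratio-preserving pullback, approximate \textsc{Min-$2$-Partition} below $1.058$ (respectively \textsc{Min-$k$-Partition} within $\mathcal{O}(|V|^2)$), contradicting Lemma \ref{lem53} (respectively Lemma \ref{lem52}). For the algorithmic half of part \ref{thm23} and for part \ref{thm24} I would instead push an instance of \textsc{PA} through $f$ of Lemma \ref{lem4}, run the $\log|V|$-approximation of Lemma \ref{lem54} at $\tau = 2$ (respectively the $\epsilon|V|^2$-approximation of Lemma \ref{lem55} at $\tau = 3$), and pull the returned solution back through $g$; since $\alpha = 1$ the ratio is preserved, and since $|V| = K$ this yields the stated $\log|K|$ and $\epsilon|K|^2$ guarantees.

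The step I expect to demand the most care is the measure translation in part \ref{thm22}. Lemma \ref{lem52} is phrased with an \emph{edge}-indexed bound $\mathcal{O}(|E|)$, whereas the theorem asserts a \emph{user}-indexed bound $\mathcal{O}(K^2)$. The bridge is that Lemma \ref{lem3} forces $K = |V|$, and that the hard instances supplied by Lemma \ref{lem52} may be taken dense, with $|E| = \Theta(|V|^2)$ --- which still satisfies $|E| = \Omega(|V|^{2-\epsilon})$; on such graphs $\mathcal{O}(|E|) = \mathcal{O}(|V|^2) = \mathcal{O}(K^2)$. Thus a hypothetical $\mathcal{O}(K^2)$-approximation of \textsc{PA} would yield, on precisely these dense instances, an $\mathcal{O}(|E|)$-approximation of \textsc{Min-$k$-Partition}, which is exactly what Lemma \ref{lem52} forbids for $\tau = k \ge 3$. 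I would make this density restriction explicit in the argument, since it is what aligns the two measures and is the only place where the inapproximability factor is genuinely at stake.
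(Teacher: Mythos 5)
Your proposal is correct and takes essentially the same route as the paper: Lemma~\ref{lem3} (the $\alpha = 1$ AP-reduction from \textsc{Min-$k$-Partition} to \textsc{PA}) imports the negative results of Lemmas~\ref{lem51}, \ref{lem52} and \ref{lem53}, Lemma~\ref{lem4} imports the algorithmic guarantees of Lemmas~\ref{lem54} and \ref{lem55}, and Lemmas~\ref{lem1} and \ref{lem2} give membership in exp-APX. Your two points of extra care --- explicitly invoking Lemma~\ref{lem3} for the non-APX half of part~$(i)$ (which the paper leaves implicit, citing only Lemmas~\ref{lem1}, \ref{lem2} and \ref{lem51}), and spelling out the density argument that turns the $\mathcal{O}(|E|)$ bound of Lemma~\ref{lem52} into the $\mathcal{O}(K^2)$ bound of part~$(ii)$ --- are refinements of, not departures from, the paper's own reasoning.
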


\begin{proof}
 \begin{enumerate}[label={$(\roman*)$}]
    \item []
    \item From Lemmas \ref{lem1} , \ref{lem2} and \ref{lem51}, the result follows immediately. 
    \item From Lemma \ref{lem52}, we have that \textsc{Min-$k$-Partition} cannot be approximated within $\mathcal{O}(|E|)$ in polynomial time for $k \geq 3$. Further, in \textsc{PA}, we typically deal with complete (or dense) graphs. So we have that $|E| = \Omega(K^2)$, where $K$ is the number of users, or vertices in the graph. The number of pilots available, or the number of partitions required in the graph is $\tau$.  Hence, using Lemma \ref{lem3}, we get that unless P $=$ NP, it is impossible to approximate \textsc{PA} within $\mathcal{O}(K^2)$ in polynomial time for $\tau \geq 3$. 
    
    \item The first part of this claim follows from Lemma \ref{lem3} and Lemma \ref{lem53}, while the second part follows from Lemma \ref{lem4} and Lemma \ref{lem54}.  
    \item As above, this follows easily from Lemma \ref{lem4} and Lemma \ref{lem55}.  \qed 
\end{enumerate}
\end{proof}

\section{Conclusion} \label{sec4}
In this paper, we studied the inherent hardness of the pilot assignment problem and found provable guarantees on the quality of achievable solutions to the problem. We defined the cell-free massive MIMO system mathematically, which is crucial in any theoretical study of this problem. We further proved that \textsc{PA} is strongly NP-hard, and at the same time, does not belong to the APX class of problems. The big picture here is that, to the best of our knowledge, our paper is the first to study \textsc{PA} from a theoretical computer science perspective, providing complexity results for the problem. 
\\ \\
Recall that while the \textsc{Max-$k$-Cut} problem (see Section \ref{subsec13}) has been used to develop the current state of the art for \textsc{PA}, our contribution here is to focus on its dual problem, the \textsc{Min-$k$-Partition}. Due to the varying nature of the approximability of these optimization problems, it is better to apply a heuristic for the \textsc{Min-$k$-Partition} problem to solve \textsc{PA}. This is due to the fact that the performance ratio of any heuristic/algorithm designed for \textsc{Min-$k$-Partition} can be translated directly for \textsc{PA}. There is extensive research on formulating and solving the \textsc{Min-$k$-Partition} problem using Linear Programming (LP) and Semi-definite Programming (SDP) relaxation approaches \cite{Chopra1993ThePP,Eisenblaetter2001,FAIRBROTHER201797,FJ1997,Ghaddar2011ABA,Wang2017ExploitingSF}. This remains an active area of research. Further research in this area seems to be a step in the right direction towards finding better optimal solutions for \textsc{PA}.
\\ \\
It is also interesting to note that there has been very little research on the parameterized complexity of the \textsc{Min-$k$-Partition} problem. This leads us to believe that the problem may not be FPT (fixed-parameter tractable). A practical question that arises here is if this problem admits an XP algorithm for some suitable parameter. Such an algorithm may yield a pilot assignment scheme with superior performance, depending on the parameter used in the algorithm. 

\subsubsection{Acknowledgements.} The authors thank Prof.~David Manlove for his feedback on drafts of this paper. Further, we gratefully acknowledge the support of Dr.~Yusuf Sambo and Dr.~Abdulrahman Al Ayidh in understanding the problem as well as its existing heuristics, during the early stages of this project. This work was done while S. Prusty was working on her Master's project at the School of Computing Science, University of Glasgow. S. Prusty would like to thank DST-INSPIRE, Government of India for their support via the Scholarship for Higher Education (SHE) programme. 

\appendix

\section{Appendix for Section \ref{subsec23}} \label{apa}

\subsection{Definitions of the Decision and Optimization versions of \textsc{PA}} 
\begin{definition}[\textbf{Decision version of \textsc{PA}}]\label{def14}
The decision version of \textsc{PA} is defined by a triple $(I_{\textsc{PA}}, q, SOL_{\textsc{PA}})$, where $I_{\textsc{PA}}$ is the set of all cell-free massive MIMO systems $S$, $q \in \mathbb{Q}_+$ and $SOL_{\textsc{PA}} : I_{\textsc{PA}} \rightarrow \{0,1\}$ is a function that assigns to each pair $(S, q)$ either the value 0 or 1. The assignment is done based on whether an instance $S$ has a value of at most $q$ for Equation $(\ref{eq6})$. The problem asks if there exists an instance $S \in I_{\textsc{PA}}$ such that given a number $q$, $SOL_{\textsc{PA}}(S, q) = 1$.
\end{definition}

\begin{definition}[\textbf{Optimization version of \textsc{PA}}] \label{def15}
The minimization problem \textsc{PA} is characterized by the tuple $(I_{\textsc{PA}}, SOL_{\textsc{PA}}, m_{\textsc{PA}}, \textsc{min})$, where $I_{\textsc{PA}}$ is the set of all cell-free massive MIMO systems $S$ $($instances of \textsc{PA}$)$, $SOL_{\textsc{PA}}$ is a function that assigns every instance $S \in I_{\textsc{PA}}$ to its set of feasible pilot assignments $f$, and $m_{\textsc{PA}}$ is a measure function that assigns to each pair $(S, f)$, where $S \in I_{\textsc{PA}}$ and $f \in SOL_{\textsc{PA}}(S)$, the value of Equation $(\ref{eq6})$. The problem asks for a given instance $S$, a feasible assignment $f \in SOL_{\textsc{PA}}(S)$ such that $m_{\textsc{PA}}(S,f) = \min \limits_{f' \in SOL_{\textsc{PA}}(S)} m_{\textsc{PA}}(S,f')$. 
\end{definition}

\bibliographystyle{splncs04}
\bibliography{biblio}
%

\end{document}